\begin{document}
\newtheorem{theo}{Theorem}
\newtheorem{lemm}{Lemma}
\newtheorem{defi}{Definition}
\newtheorem{prop}{Proposition}
\newtheorem{sket}{Skecth}
\title{$\mathsf{QMA}$(2) with postselection equals to $\mathsf{NEXP}$}
\author{Yusuke Kinoshita~\thanks{Graduate School of Informatics, Nagoya University,
kinoshita.yusuke@j.mbox.nagoya-u.ac.jp.}}
\maketitle

\begin{abstract}
We study the power of $\mathsf{QMA}(2)$ with postselection and show that the power is equal to $\mathsf{NEXP}$. Our method for showing this equality can be also used to prove that other classes with exponentially small completeness-soundness gap equals to the corresponding postselection versions.

\end{abstract}
\section{Introduction}
Verifying a proof sent from an unlimitedly powerful prover is one of the main issues in complexity theory. One of such complexity classes is $\mathsf{MA}$, introduced by Babai \cite{B}. This class is defined as the class of languages decided by a Merlin-Arthur system, that consists of a probabilistic polynomial time verifier called Arthur and  an infinitely powerful prover called Merlin. Arthur computes with a proof sent from Merlin, and each yes instances have at least one proof that Arthur accepts with high probability. This class is important in classical complexity theory.

In quantum complexity theory, there exists a similar class called Quantum Merlin-Arthur($\mathsf{QMA}$), and this has been intensively studied since introduced by Knill \cite{K}, Kitaev \cite{KSV}, and Watrous \cite{W}. In the most common setting, Merlin provides a quantum proof and Arthur is allowed to use polynomial time quantum computations.

Kobayashi et al. \cite{KMY} posed a question that a concatenation of many quantum proofs can be simulated by one quantum proof. In classical settings, it is obvious that many proofs can be simulated by a concatenated one proof, but one quantum proof may not be a  concatenation of non-entangled many proofs. Hence a direct simulation of the original proof system with concatenated one proof may be cheated by an entangled proof. This problem has a relation to the property of entanglement and fundamentals of quantum complexity, and it is natural to introduce the complexity class $\mathsf{QMA(2)}$, which is decided by a Merlin-Arthur system which uses two Merlins. $\mathsf{QMA(2)}$ has a natural complete problem arising from quantum chemistry, called Pure State N-Representability Problem \cite{LCV}. It is conjectured that witnesses without entanglements are hard to simulate by one witness, and $\mathsf{QMA}$ and $\mathsf{QMA}$(2) are different complexity classes. For example, Blier and Tapp \cite{BT} showed that $\mathsf{QMA}$(2) can solve SAT with $O(\log n)$ size witnesses and with polynomial inverse completeness-soundness gap, while such an algorithm is not known for $\mathsf{QMA}$. The gap parameter of this result was improved by Aaronson et al. \cite{ABDF}, who used $O(\sqrt n$poly$\log n)$ length witnesses and got a constant completeness-soundness gap.  Whether $\mathsf{QMA}(k)(k>2)$, which is an analogue of $\mathsf{QMA(2)}$ with $k$ proofs, is equal to $\mathsf{QMA(2)}$ was an important problem. This problem was resolved by Harrow and Montanaro \cite{HM}. They showed $\mathsf{QMA}(2)=\mathsf{QMA}(k)$(for any $k>2$).

In quantum complexity theory, complexity classes with postselection has been used to show quantum supremacy of sub-universal models such as Boson Sampling \cite{AA}, IQP \cite{BJS,BMS}, and DQC1 \cite{FKM+,MFF}. These results mean that the difficulty of the simulation of probabilistic distribution of quantum circuits relates to separation of Polynomial Hierarchy. Recently postselection  beyond $\mathsf{BQP}$ was investigated in \cite{MN,UHB}. Morimae and Nishimura \cite{MN} showed $\mathsf{postQMA}=\mathsf{PSPACE}$, where $\mathsf{postQMA}$ is the postselection version of $\mathsf{QMA}$, using the result that $\mathsf{PSPACE}$ is equal to $\mathsf{QMA}$ with exponentially small gap \cite{FL}. They also showed several results on complexity classes with postseletion.

In this article, we define $\mathsf{postQMA}$(2) and show $\mathsf{postQMA}(2)=\mathsf{NEXP}$. The class $\mathsf{postQMA}(k)$ is easily computed by $\mathsf{NEXP}$, and hence the power of quantum computation with non-entangled witnesses and postselection is characterized exactly. Here we describe the main technical difficulty briefly, and more details are in the appendix. The previous techniques \cite{Aa, MN} is preparing $\alpha|0\rangle+\beta|1\rangle$ by using output qubit and detecting $\beta\lessgtr 0$. It is necessary to erase the garbage, that is, natural quantum computing makes a state in the form of  $\alpha|0\rangle|\phi_0\rangle+\beta|1\rangle|\phi_1\rangle$, but what we needs  is $\alpha|0\rangle+\beta|1\rangle$. To erase the garbage, Aaronson \cite{Aa} uses reversible computation of classical circuit and applying Hadamard transformation to computational basis, and Morimae and Nishimura \cite{MN} use distillation \cite{KGN}. Both techniques cannot be applied to $\mathsf{postQMA}(2)$. The garbage of general quantum computation cannot be erased in contrast to superposition of  computations of classical reversible circuit, and the witnesses cannot be restricted to eigenvectors since the witness space is not linear, therefore distillation cannot be applied to $\mathsf{postQMA}(2)$. These difficulties are linked to the difficulty of computation with non-entangled proofs, and it seems difficult to analyze $\mathsf{postQMA}(2)$ by the previous protocol. Our technique is restricting completeness to 1 or bounding completeness error, and bounding the acceptance probability even with garbage superposition. Our technique is useful for other complexity classes to prove that the corresponding classes with exponentially small completeness/soundness gaps are equal to the postselection versions. For example we define $\mathsf{postQIP}$ and sketch the proof that $\mathsf{postQIP}$ equals to $\mathsf{QIP}$ with exponentially small gap. 

The remainder of the paper is organized as follows.  Section 2 defines a new complexity class $\mathsf{postQMA(2)}$. Section 3 shows the main result. In section 4 we show another result about postseletion to compare with the main result and discuss about the case that completeness is less than $1$. In section 5 we give several open problems.

\section{Preliminaries}
We assume that readers are familiar with quantum computation \cite{KSV} and classical computational complexity \cite{AB}. In this section we define $\mathsf{QMA}(2)(c,s)$ and $\mathsf{postQMA}(2)(c,s)$.
\begin{defi}[$\mathsf{QMA(2)}(c,s)$]
Let $L$ be a language. Let $c(n),s(n): \mathbb{Z}\rightarrow[0,1]$ be functions that can be computed in polynomial time and $c(n)-s(n)$ is positive and larger than the inverse of some polynomial if $n$ is sufficiently large. $L$ is in $\mathsf{QMA(2)}(c,s)$ if there exist polynomials $w(n)$, $m(n)$, and a uniform quantum circuit family $\{V_x\}$ constructed in polynomial time that satisfies follows for any $n$ and any string $x$ with $|x|=n$:
\begin{item}
\item
if $x\in L$, then there exist 2 $w(n)$-qubit states $|\psi_1\rangle$ and $|\psi_2\rangle$ such that
\begin{equation*}
{\rm Pr}_{V_x(|\psi_1\rangle|\psi_2\rangle)}[o=1]\ge c,
\end{equation*}
\item
if $x\notin L$, then for any 2 $w(n)$-qubit states $|\psi_1\rangle$ and $|\psi_2\rangle$,
\begin{equation*}
{\rm Pr}_{V_x(|\psi_1\rangle|\psi_2\rangle)}[o=1]\le s.
\end{equation*}
\end{item}
Here, ${\rm Pr}_{V_x(|\psi_1\rangle|\psi_2\rangle)}[o=1]$ is the probability that $V_x$ with inputs $|\psi_1\rangle|\psi_2\rangle$ outputs $o=1$. Namely it is defined by 
\begin{equation*}
\begin{split}
&{\rm Pr}_{V_x(|\psi_1\rangle|\psi_2\rangle)}[o=1] =\\
&\ \ {\rm Tr}[|1\rangle\langle1|\otimes I^{\otimes 2w(n)+m(n) -1}Q_x(|\psi_1\rangle\langle\psi_1|\otimes |\psi_2\rangle\langle\psi_2|\otimes |0\rangle\langle0|^{\otimes m(n)}){Q_x}^\dagger].
\end{split}
\end{equation*}
\end{defi}

\begin{defi}[$\mathsf{postQMA(2)}(c,s)$]
Let $L$ be a language. Let $c(n),s(n): \mathbb{Z}\rightarrow[0,1]$ be functions that can be computed in polynomial time and $c(n)-s(n)$ is positive and larger than the inverse of some polynomial if $n$ is sufficiently large. $L$ is in $\mathsf{postQMA(2)}(c,s)$ if there exist polynomials $w(n)$, $m(n)$, $l(n)$, and a uniform quantum circuit family $\{V_x\}$ constructed in polynomial time that satisfies follows for any $n$ and any string $x$ with $|x|=n$:
\begin{item}
\item For all 2 $w(n)$-qubit states $|\psi_1\rangle$ and $|\psi_2\rangle$,
\begin{equation*}
{\rm Pr}_{V_x(|\psi_1\rangle|\psi_2\rangle)}[p=1]\ge 2^{-l(n)},
\end{equation*}
\item
if $x\in L$, then there exist 2 $w(n)$-qubit states $|\psi_1\rangle$ and $|\psi_2\rangle$ such that 
\begin{equation*}
 {\rm Pr}_{V_x(|\psi_1\rangle|\psi_2\rangle)}[o=1|p=1]\ge c(n),
\end{equation*}
\item
if $x\notin L$, then for any 2 $w(n)$-qubit states $|\psi_1\rangle$ and $|\psi_2\rangle$,
\begin{equation*}
{\rm Pr}_{V_x(|\psi_1\rangle|\psi_2\rangle)}[o=1|p=1]\le s(n).
\end{equation*}
\end{item}
Here, ${\rm Pr}_{V_x(|\psi_1\rangle|\psi_2\rangle)}[o=1|p=1]$ is the conditional probability that $V_x$ with inputs $|\psi_1\rangle|\psi_2\rangle$ outputs $o=1$ with the condition $p=1$. This probability is caluculated as follows.
\begin{equation*}
\begin{split}
{\rm Pr}_{V_x(|\psi_1\rangle|\psi_2\rangle)}[o=1|p=1]=\frac{{\rm Pr}_{V_x(|\psi_1\rangle|\psi_2\rangle)}[o=1,p=1]}{{\rm Pr}_{V_x(|\psi_1\rangle|\psi_2\rangle)}[p=1]}.
\end{split}
\end{equation*}
Each term in LHS is defined by
\begin{equation*}
\begin{split}
&{\rm Pr}_{V_x(|\psi_1\rangle|\psi_2\rangle)}[o=1,p=1]=\\
&\ \ {\rm Tr}[|1\rangle\langle1|\otimes |1\rangle\langle1|\otimes I^{\otimes 2w(n)+m(n) -1}Q_x|\psi_1\rangle\langle\psi_1|\otimes |\psi_2\rangle\langle\psi_2|\otimes|0\rangle\langle0|^{\otimes m(n)}{Q_x}^\dagger],\\
&{\rm Pr}_{V_x(|\psi_1\rangle|\psi_2\rangle)}[p=1]=\\
&\ \ {\rm Tr} [I\otimes |1\rangle\langle1|\otimes I^{\otimes 2w(n)+m(n) -1}Q_x(|\psi_1\rangle\langle\psi_1|\otimes |\psi_2\rangle\langle\psi_2|\otimes|0\rangle\langle0|^{\otimes m(n)}){Q_x}^\dagger].
\end{split}
\end{equation*} 
\end{defi}
\section{Main Result}
\begin{theo}
There exists a constant $s$ such that $\mathsf{postQMA}(2)(1,s)=\mathsf{NEXP}$.
\end{theo}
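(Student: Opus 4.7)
The theorem has two directions; the inclusion $\mathsf{postQMA}(2)(1,s) \subseteq \mathsf{NEXP}$ is the easy one. An $\mathsf{NEXP}$ machine guesses $2^{\mathrm{poly}(n)}$-bit approximations of the two $\mathrm{poly}(n)$-qubit witnesses, simulates the polynomial-size verifier circuit in exponential time, and computes both $\Pr[o=1, p=1]$ and $\Pr[p=1]$ to exponential precision. Because the completeness-soundness gap is a constant and $\Pr[p=1]$ is lower-bounded by $2^{-l(n)}$, the conditional probability is estimated to within a constant and the answer is determined. I focus on the harder direction $\mathsf{NEXP} \subseteq \mathsf{postQMA}(2)(1,s)$.

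First I reduce an $\mathsf{NEXP}$ language to succinct $3$-\textsc{Sat} on $N = 2^{\mathrm{poly}(n)}$ variables. Next I apply the Blier--Tapp protocol \cite{BT} at this exponential scale: each prover sends $|\psi\rangle = \frac{1}{\sqrt{N}}\sum_i |i\rangle|x_i\rangle$ encoding a claimed satisfying assignment $x$, which takes only $\mathrm{poly}(n)$ qubits, and Arthur performs, uniformly at random, a \textsc{Swap} test, a consistency check on one witness, and a random clause test (the clause being computable in polynomial time from its index in the succinct instance). Scaling the Blier--Tapp analysis yields a $\mathsf{QMA}(2)$ verifier $V$ with \emph{perfect} completeness and soundness $1 - 2^{-q(n)}$ for some polynomial $q$.

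The core step is amplifying this exponentially small gap to a constant with a single postselection. Construct a $\mathsf{postQMA}(2)$ verifier $V'$ that prepares a fresh ``flag'' qubit in $\sqrt{1-\delta}|0\rangle + \sqrt{\delta}|1\rangle$ with $\delta = 2^{-q(n)-k}$ for a large constant $k$, applies $V$ coherently controlled on $\mathrm{flag} = 0$, and then computes $o := \mathrm{flag}$ and $p := \mathrm{flag} \lor \lnot o_V$ into two fresh ancillas, where $o_V$ denotes $V$'s designated output qubit. A direct measurement calculation gives $\Pr[o=1, p=1] = \delta$, contributed solely by the flag $=1$ branch and hence independent of the provers' states and of any garbage, while $\Pr[p=1] = \delta + (1-\delta)(1-a)$ with $a$ the acceptance probability of $V$. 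Perfect completeness of $V$ gives $a=1$ when $x \in L$, hence $\Pr[o=1 \mid p=1] = 1$. When $x \notin L$, the bound $1-a \ge 2^{-q(n)} = 2^k\delta$ yields $\Pr[o=1 \mid p=1] \le 1/(1 + (1-\delta)2^k)$, below any target constant $s<1$ for sufficiently large $k$. The postselection probability satisfies $\Pr[p=1] \ge \delta = 2^{-l(n)}$ for polynomial $l$.

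The main obstacle is precisely the one flagged in the introduction: neither Aaronson's reversible-circuit trick \cite{Aa} nor Morimae--Nishimura's distillation technique \cite{MN, KGN} can strip off the quantum garbage entangled with $V$'s output, because product-state witnesses are neither classical inputs nor eigenstates of a distillable operator. The construction above sidesteps this precisely because perfect completeness forces all of $\Pr[o=1, p=1]$ into the artificial flag branch, so the numerator is garbage-free by design while the denominator still detects the tiny soundness bias of $V$. The delicate point requiring care is ensuring that Blier--Tapp at the succinct scale can be made genuinely perfect-complete rather than merely $1 - 2^{-q'(n)}$ complete; if a completeness error creeps in, this branch of the analysis must be replaced by the completeness-error-bounding variant, which, as the introduction indicates, proceeds analogously and underlies the parallel $\mathsf{postQIP}$ claim.
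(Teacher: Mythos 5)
Your proof is correct, and while the outer structure matches the paper's (an easy $\mathsf{NEXP}$ upper bound by guessing exponential-precision descriptions of the witnesses, plus $\mathsf{NEXP}\subseteq\mathsf{QMA}(2)(1,1-2^{-r})$ via Blier--Tapp at the succinct scale \cite{BT,P}), your key technical step is a genuinely different construction from the paper's Lemma 2. The paper adapts the Morimae--Nishimura distillation circuit: run $Q_x$, copy the output qubit, run $Q_x^{-1}$, then perform rotations by $\epsilon=2^{-10r}$ and two successive projective measurements on a two-qubit register, carefully tracking the component $|\bot\rangle$ orthogonal to the witness that appears because an optimal witness of a no-instance need not be an eigenstate; the soundness bound falls out of an interference calculation on the resulting state. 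Your flag-qubit construction instead reduces everything to a ratio of two plain measurement probabilities, $\delta$ versus $\delta+(1-\delta)(1-a)$: the numerator is witness-independent by design, and no branch of the computation is ever interfered with another, so the garbage/eigenstate obstruction that motivates the paper's circuit simply never arises. Perfect completeness is doing the same work in both proofs (it kills the $|\bot\rangle$ term in the paper, and makes $1-a=0$ exactly in yours), but your route is shorter, avoids the $O(\epsilon/(1-p_x))$ bookkeeping entirely, and transparently applies to any perfect-completeness class with an exponentially small gap --- which is the generalization the paper only sketches for $\mathsf{postQIP}$. The one point you correctly flag, perfect completeness of the scaled-up Blier--Tapp verifier, is also assumed without proof by the paper (its Lemma 1), so you are on equal footing there.
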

Theorem 1 is sufficient to prove that there is a polynomial $p$ which satisfies $\mathsf{postQMA}(2)(1,2^{-p})=\mathsf{NEXP}$, since we can amplify the gap between completeness and soundness by repetition of witnesses. This is also sufficient to prove that $\mathsf{postQMA}(2)(c,s)=\mathsf{NEXP}$ for any constant $c<1$, since it is enough to prepare a new 1 qubit $\sqrt c|0\rangle+\sqrt{1-c}|1\rangle$ and accept if the original completeness 1 protocol accepts and the new qubit outputs 1.
First, we prove {$\mathsf{postQMA(2)}\subseteq \mathsf{NEXP}$.
\begin{prop}
{$\mathsf{postQMA(2)}\subseteq \mathsf{NEXP}$.}
\end{prop}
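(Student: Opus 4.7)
The plan is to have an $\mathsf{NEXP}$ machine nondeterministically guess classical approximations of the two witnesses and then deterministically check the postselected acceptance condition by brute-force simulation of $V_x$.

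First, the machine guesses, for each $i\in\{1,2\}$, the $2^{w(n)}$ complex amplitudes of a truncated vector $|\tilde\psi_i\rangle$, each represented as a pair of dyadic rationals on $\operatorname{poly}(n)$ bits, followed by an explicit renormalization. The total guess has length $2^{w(n)}\cdot\operatorname{poly}(n)=2^{\operatorname{poly}(n)}$, which fits inside the $\mathsf{NEXP}$ budget. It then deterministically simulates $V_x$ on $|\tilde\psi_1\rangle|\tilde\psi_2\rangle|0\rangle^{\otimes m(n)}$ gate by gate, carrying each intermediate amplitude at $\operatorname{poly}(n)$ bits of precision. Since $V_x$ has polynomial size and acts on $2w(n)+m(n)=\operatorname{poly}(n)$ qubits, this uses $2^{\operatorname{poly}(n)}$ time and space. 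From the final state the machine forms rational approximations of $\Pr[o=1,p=1]$ and $\Pr[p=1]$ by summing $|{\rm amplitude}|^2$ over the relevant computational basis outcomes, takes the ratio, and accepts iff it is at least the midpoint $(c+s)/2$.

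The main obstacle is the precision analysis. Because the definition forces $\Pr[p=1]\ge 2^{-l(n)}$ for \emph{every} pair of witnesses, the map $(\mathrm{num},\mathrm{den})\mapsto \mathrm{num}/\mathrm{den}$ is $O(2^{2l(n)})$-Lipschitz near the true values; moreover each gate of $V_x$ is an isometry, so amplitude errors in the input to the simulation do not amplify, and the cumulative rounding error over a polynomial-size circuit stays polynomial in the per-gate precision. Choosing the per-amplitude precision to be on the order of $2^{-2l(n)-\operatorname{poly}(n)}$ --- still only $\operatorname{poly}(n)$ bits per amplitude --- makes the computed conditional probability lie within $(c-s)/4$ of the true one. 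Since $c-s$ is at least an inverse polynomial, this suffices to separate the two cases: in a yes instance, a fine enough rounding of the optimal witnesses is among the guesses and is accepted; in a no instance, every renormalized $(|\tilde\psi_1\rangle,|\tilde\psi_2\rangle)$ is a bona fide pair of pure states, so the true conditional probability is at most $s$ and its approximation falls below the midpoint. The only remaining subtlety is checking that the explicit renormalization after truncation preserves this order of precision, which follows because the truncated vector already has $L^2$-norm very close to $1$.
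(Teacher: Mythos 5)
Your proposal is correct and follows essentially the same route as the paper, which simply observes that the poly-qubit witnesses can be described by exponential-length classical strings with exponential precision and the acceptance probability then computed in nondeterministic exponential time. Your write-up merely supplies the precision analysis (using $\Pr[p=1]\ge 2^{-l(n)}$ to control the conditional-probability ratio) that the paper's one-line argument leaves implicit.
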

\begin{proof}
The witnesses of $\mathsf{postQMA(2)}$ are poly-length qubits, and hence exponential length classical bits can describe these witnesses with exponential precision, and the acceptance probability can be computed in non-deterministic exponential time.
\end{proof}
Next, we prove that $\mathsf{postQMA(2)}\supseteq \mathsf{NEXP}$. We use the next lemma from previous results.
\begin{lemm}[\cite{BT,P}]
For any polynomial $r$, $\mathsf{NEXP} \subseteq \mathsf{QMA}(2)(1, 1-2^{-r})$ holds.
\end{lemm}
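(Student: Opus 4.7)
My plan is to scale up the Blier--Tapp construction \cite{BT} from SAT to NEXP via a PCP-style characterization of NEXP, following Pereszl\'enyi \cite{P}. The starting point is the scaled-up PCP theorem: every $L\in\mathsf{NEXP}$ has a probabilistically checkable proof of exponential length $N=2^{q(n)}$, where a polynomial-time verifier reads only a constant number of bits and decides with perfect completeness and constant soundness. By standard constraint reductions, we can assume this boils down to a constant-arity local check (for instance a 3-coloring constraint on an exponentially large graph) picked at random, so the task for the quantum verifier is to be convinced that some exponentially long assignment satisfies all the local constraints.

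To build the QMA(2) protocol, each of the two provers sends the $\mathrm{poly}(n)$-qubit state
\[
|\psi\rangle=\frac{1}{\sqrt N}\sum_{i=1}^{N}|i\rangle|\pi(i)\rangle,
\]
where $\pi$ is the purported PCP proof. Arthur selects uniformly at random among a small number of tests: (i) a swap test between the two copies to enforce symmetry; (ii) a computational-basis measurement of the index registers together with an equality check, producing a uniformly random variable index with its claimed value; (iii) a Fourier-basis check on the index registers to certify that the index marginal is uniform; and (iv) coherent queries to both witnesses at the variables touched by a random PCP constraint, followed by the constraint check. Perfect completeness is immediate when the honest encoding of a satisfying assignment is sent. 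For soundness, tests (i)--(iii) force the two witnesses to be close to an ideal encoding of some fixed assignment, and test (iv) then detects any violated constraint with probability at least $1/\mathrm{poly}(N)=2^{-\mathrm{poly}(n)}$. This gives $\mathsf{NEXP}\subseteq\mathsf{QMA}(2)(1,1-2^{-p(n)})$ for some polynomial $p$, and standard parallel repetition of the witnesses (which preserves the product structure across provers) boosts the exponent to match any prescribed polynomial $r$.

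The main technical obstacle is the soundness analysis. Dishonest provers may send states that are far from the ideal amplitude encoding and may even be internally entangled across the index and value registers. One must show that passing tests (i)--(iii) with high probability forces each witness to be close in trace distance to the ideal encoding of some single assignment $\pi^\ast$, \emph{independent} of the adversary's strategy, before invoking classical PCP soundness on test (iv). This is precisely the content of the Blier--Tapp/Pereszl\'enyi soundness lemma, and I would import it as a black box from \cite{BT,P} rather than re-prove it here, since the present paper only uses the conclusion.
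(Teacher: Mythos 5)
Your construction is the one the paper intends: the paper's own justification of Lemma~1 is a two-sentence sketch citing Blier--Tapp for the log-size-witness protocol with inverse-polynomial gap and Pereszl\'enyi for the scaled-up (succinct/exponential-size) version, and your amplitude-encoded proof $\frac{1}{\sqrt N}\sum_i|i\rangle|\pi(i)\rangle$ with swap, consistency, uniformity and constraint tests, importing the soundness lemma of \cite{BT,P} as a black box, is exactly that argument spelled out in more detail. (Whether one starts from a scaled-up PCP or simply from the $\mathsf{NEXP}$-completeness of succinct 3COLOR is immaterial; the latter is what \cite{P} uses and needs no PCP machinery.) The one step I would push back on is the final sentence of your second paragraph: ``standard parallel repetition of the witnesses'' does \emph{not} amplify soundness for $\mathsf{QMA}(2)$, because a cheating prover may entangle its registers \emph{across} the $k$ repetitions even though the two provers remain unentangled with each other; one then only gets that each repetition individually accepts with probability at most $s$, not that the conjunction accepts with probability $s^k$. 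Genuine amplification here requires something like the Harrow--Montanaro product test. This gap is harmless for the way Lemma~1 is used in the paper: Lemma~2 holds for \emph{every} polynomial $r$, so only the existence of \emph{some} polynomial $r$ with $\mathsf{NEXP}\subseteq\mathsf{QMA}(2)(1,1-2^{-r})$ is needed, and for any larger $r'$ the containment follows by monotonicity of the class in the soundness parameter. But as a proof of the literal ``for any polynomial $r$'' statement (in particular for $r$ smaller than the exponent your protocol naturally produces), the repetition step as you describe it does not go through.
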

The sketch of this theorem is as follows: 3COLOR can be solved with log size witnesses, if the comleteness/soundness gap is the inverse of a polynomial \cite{BT}. A similar proof is efficient for succinct 3COLOR \cite{P}.

The next lemma is our main technical result.
\begin{lemm}
There exists a constant $s$ such that $\mathsf{QMA}(2)(1,1-2^{-r})\subseteq$\\$ \mathsf{postQMA}(2)(1,s)$ holds for any polynomial $r$.
\end{lemm}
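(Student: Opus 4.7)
My plan is to construct a $\mathsf{postQMA}(2)(1,s)$ verifier from a single use of the given $\mathsf{QMA}(2)(1,1-2^{-r})$ verifier $V_x$, augmented with one independently-prepared ancilla qubit, and to postselect on a purely computational-basis event on that ancilla together with the $V_x$ output qubit. The design choice that makes this work is that the postselection never interferes accepting and rejecting branches of $V_x$, so the garbage register $|\phi_0\rangle,|\phi_1\rangle$ produced by $V_x$ simply factors out of every probability computation, side-stepping the obstructions to distillation and to Aaronson's Hadamard trick flagged in the introduction.

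Concretely, on witnesses $|\psi_1\rangle,|\psi_2\rangle$, I would prepare an ancilla qubit $a$ in $\cos\theta|0\rangle+\sin\theta|1\rangle$ with $\sin^2\theta=2^{-r(n)}$, and in parallel apply $V_x$ to $|\psi_1\rangle|\psi_2\rangle|0\rangle^{\otimes m}$, which produces on those registers $\alpha|1\rangle|\phi_1\rangle+\beta|0\rangle|\phi_0\rangle$ with $|\alpha|^2=p_V$ equal to the $V_x$-acceptance probability. Let $v$ denote the $V_x$ output qubit. I would define the new postselection bit by $p=1$ iff $a=1$ or $v=0$, and set the new output bit to $o=a$; both are computed into fresh output registers by a small classical-reversible sub-circuit (controlled-NOTs and one Toffoli). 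Since $a$ is tensor-separated from the $V_x$ register, the probabilities of events on $(a,v)$ do not depend on $|\phi_0\rangle,|\phi_1\rangle$.

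The analysis then reduces to a one-line calculation. For any witnesses $\Pr[p=1]=\sin^2\theta+(1-p_V)\cos^2\theta\ge \sin^2\theta=2^{-r}$, so the postselection-probability lower bound in the definition of $\mathsf{postQMA}(2)$ holds with $l(n)=r(n)$. The conditional acceptance probability is $\Pr[o=1\mid p=1]=\sin^2\theta/(\sin^2\theta+(1-p_V)\cos^2\theta)$. In the yes case I pick witnesses that make $V_x$ accept with probability exactly $1$, so $p_V=1$ and the ratio is $1$, yielding perfect completeness. In the no case every pair of witnesses has $p_V\le 1-2^{-r}$, hence $1-p_V\ge 2^{-r}$, and the ratio is bounded by $2^{-r}/\bigl(2^{-r}+2^{-r}(1-2^{-r})\bigr)=1/(2-2^{-r})\le 2/3$ for $r\ge 1$, giving a constant soundness $s$ independent of $r$.

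The only point I would need to examine carefully is synthesizing the exponentially shallow rotation $\sin^2\theta=2^{-r(n)}$ within the polynomial-time uniform-circuit requirement, but this is standard approximate synthesis of a single-qubit gate in $\mathrm{poly}(r)$ gates; the resulting $O(2^{-2r})$-level errors in $\theta$ only perturb the constants in the soundness estimate and keep $s$ a constant strictly below $1$. Conceptually, the whole argument rests on the fact that perfect completeness of $V_x$ pins the accepting branch to a single pure state, so postselection in the computational basis suffices to amplify the exponentially small gap to a constant gap without ever having to uncompute the garbage.
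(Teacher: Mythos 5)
Your proof is correct, but it takes a genuinely different --- and substantially simpler --- route than the paper's. The paper follows the Morimae--Nishimura distillation template: apply $Q_x$, copy the output qubit, apply $Q_x^{-1}$, decompose the result into a component along $|\psi\rangle$ and an orthogonal leftover $|\bot\rangle$ (which appears because optimal no-instance witnesses need not be eigenvectors of $\Pi_{0^n}Q_x^\dagger\Pi_{acc}Q_x$), then prepare $\epsilon|1\rangle-|0\rangle$ with $\epsilon=2^{-10r}$, rotate, postselect onto $\mathrm{span}\{|00\rangle,|11\rangle\}$, and finally measure against $|00\rangle+|11\rangle$; the technical work is controlling the interference in the presence of $|\bot\rangle$. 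You avoid interference entirely: your postselection event ``$a=1$ or $v=0$'' and output $o=a$ are computational-basis events on two qubits whose joint law is an explicit product distribution, so the garbage states $|\phi_0\rangle,|\phi_1\rangle$ and the whole $|\bot\rangle$ issue never enter, and the bound $\Pr[o=1\mid p=1]=\sin^2\theta/\bigl(\sin^2\theta+(1-p_V)\cos^2\theta\bigr)\le 1/(2-2^{-r})\le 2/3$ is an exact elementary computation. Both arguments lean equally hard on perfect completeness (yours because the yes-case conditional probability is $t/(t+0)=1$; the paper's because $|\bot\rangle$ vanishes when $p_x=1$), so neither extends for free to completeness bounded away from $1$; note, though, that your coin trick with bias $t=\sqrt{\epsilon\delta}$ also covers the regime of the paper's Proposition 3 and only needs $\epsilon\ll\delta$ rather than $\epsilon\ll\delta^2$. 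One cosmetic simplification: rather than approximately synthesizing a rotation with $\sin^2\theta=2^{-r}$, prepare $r$ Hadamarded ancillas and let $a$ be their AND; then $\Pr[a=1]=2^{-r}$ exactly and the gate-synthesis caveat in your last paragraph disappears.
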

\begin{figure}
\hrulefill\\
Protocol 1\\\\
$Q_x$ is the circuit for $L\in \mathsf{QMA(2)}(1, 1-2^{-r})$. 
Denote $|\psi\rangle=|\psi_{x,1}\rangle|\psi_{x,2}\rangle|0^n\rangle$: which is witnesses and ancillas.\\
We first prepare $|0\rangle|0\rangle|\psi\rangle$.\\
1. Apply $Q_x$  to $|\psi\rangle$.\\
2. Copy the output qubit to the second qubit.\\
3. Apply $Q_x^{-1}$. \\
4. Prepare $\epsilon|1\rangle-|0\rangle$ in the first qubit.\\
5. Apply the unitary operator $\frac{1}{1+\epsilon^2}\left(
\begin{array}{rrr}
      1 & \epsilon \\
      -\epsilon &1\\
    \end{array}
\right)$ to the second qubit. \\
6. Measure the first and second qubits by projection onto $\{|00\rangle,|11\rangle\}/$\\$\{|01\rangle,|10\rangle\}$ and postselect $\{|00\rangle,|11\rangle\}$.\\
7. Measure the first and second qubits by projection onto $\{|00\rangle+|11\rangle\}/\{|00\rangle-|11\rangle, |01\rangle+|10\rangle, |01\rangle-|10\rangle\}$.\\
\hrulefill
\caption{Protocol of $\mathsf{postQMA(2)}$ transfered from $\mathsf{QMA}(2)(1, 1-2^{-r})$. While step 3 is not necessary, but we add it for our analysis.}
\label{fig.1}
\end{figure}
\begin{proof}
Suppose a circuit family $\{Q_x\}$ solves $L\in \mathsf{QMA}(2)(1,1-2^{-r})$. We construct a circuit family of post$\mathsf{QMA}(2)$ from $\{Q_x\}$. The protocol is in Figure \ref{fig.1}.
The analysis is similar to \cite{MN}, using Distillation \cite{KGN}, but the main difference is that  there remains $|\bot\rangle$ orthogonal to the input state, since witnesses that maximize the acceptance probability may not be an eigenstate if $x$ is a no instance.

Suppose $|\psi_{x,1}\rangle|\psi_{x,2}\rangle$ is the witness that maximize the acceptance probability of $Q_x$. We use 2 qubits in addition to $|\psi_{x,1}\rangle|\psi_{x,2}\rangle|0^n\rangle$. There exist some states $|\phi_{x,0}\rangle$,$|\phi_{x,1}\rangle$ that satisfy the following equation 
\begin{equation}\label{eq:eq1}
Q_x|\psi_{x1}\rangle|\psi_{x2}\rangle|0^n\rangle=\sqrt{p_x}|1\rangle|\phi_{x,1}\rangle+\sqrt{1-p_x}|0\rangle|\phi_{x,0}\rangle 
\end{equation}
(\ref{eq:eq1}) is the state after step 1 in Figure 1. Prepare 1 qubit $|0\rangle$ in addition to (\ref{eq:eq1}). Apply CNOT on the new qubit and
the first qubit of (\ref{eq:eq1}) in step 2, where the latter is the control qubit. After step 2 the state is (\ref{eq:eq2}).
\begin{equation}\label{eq:eq2}
\sqrt{p_x}|11\rangle|\phi_{x,1}\rangle+\sqrt{1-p_x}|00\rangle|\phi_{x,0}\rangle. 
\end{equation}
Apply $Q_x^{-1}$ on the qubits of (\ref{eq:eq2}) that originally $Q_x$ acts on. The next state is as follows.
\begin{equation}\label{eq:eq3}
Q_x^{-1}\sqrt{p_x}|11\rangle|\phi_{x,1}\rangle+Q_x^{-1}\sqrt{1-p_x}|00\rangle|\phi_{x,0}\rangle=\sqrt{p_x}|1\rangle|f_1\rangle+\sqrt{1-p_x}|0\rangle |f_0\rangle.
\end{equation}
Denote $|\psi_{x,1}\rangle|\psi_{x,2}\rangle|0^n\rangle=|\psi\rangle$ and $\Pi_1=|1\rangle\langle1|\otimes I^{\otimes w(n)+m(n) -1}$.  Since 
$\langle\psi|\sqrt{p_x}|f_1\rangle=\langle\psi|{Q_x}^\dagger\Pi_1Q_x|\psi\rangle=|\Pi_1Q_x|\psi\rangle|^2=p_x$, $\langle\psi|f_1\rangle=\sqrt{p_x}$ and there exists $|\bot\rangle$ orthogonal to $|\psi\rangle$ such that
$|f_1\rangle=\sqrt{p_x}|\psi\rangle+\sqrt{1-p_x}|\bot\rangle$. Similary $|f_0\rangle$ can be written by $|\bot _2\rangle$ orthogonal to $|\psi\rangle$ as follows : 
$|f_0\rangle=\sqrt{1-p_x} |\psi\rangle+\sqrt{p_x}|\bot_2\rangle$.
Note that $\sqrt{p_x}|f_1\rangle+\sqrt{1-p_x}|f_0\rangle= |\psi\rangle$, $|\bot_2\rangle=-|\bot\rangle$, and $|f_0\rangle=\sqrt{1-p_x} |\psi\rangle-\sqrt{p_x}|\bot\rangle$.\\
RHS of (\ref{eq:eq3}) equals to
\begin{equation}(p_x|1\rangle+(1-p_x)|0\rangle)|\psi\rangle+\sqrt{p_x(1-p_x)}(|1\rangle-|0\rangle)|\bot\rangle.
\end{equation}Define $\epsilon=2^{-10r}$. Prepare another new 1 qubit in state $\epsilon |1\rangle-|0\rangle$ (we omit the normalization factor $1/(1+\epsilon^2)$ for convenience). The state of whole qubits are as follows.
\begin{equation}
(\epsilon |1\rangle-|0\rangle) \{(p_x|1\rangle+(1-p_x)|0\rangle)|\psi\rangle+\sqrt{p_x(1-p_x)} (|1\rangle-|0\rangle)|\bot\rangle\}.
\end{equation}

Apply a unitary operator :$|0\rangle\rightarrow |0\rangle+\epsilon|1\rangle, |1\rangle\rightarrow|1\rangle-\epsilon|0\rangle$ on the second qubit iagain we omit the normalization factor). The whole state will become as follows: 
\begin{equation}
\begin{split}
(\epsilon |1\rangle - |0\rangle)\{&(p_x+(1-p_x)\epsilon)|1\rangle+((1-p_x)-\epsilon p_x)|0\rangle)|\psi\rangle\\
&+\sqrt{p_x(1-p_x)} ((1-\epsilon)|1\rangle-(1+\epsilon)|0\rangle)|\bot\rangle\}.
\end{split}
\end{equation}
Measure the first 2 qubits by projection to the space spaned by $\{|00\rangle,|11\rangle\}$/$\{|01\rangle,$\\$|10\rangle\}$, and postselect $\{|00\rangle,|11\rangle\}$. The states becomes:
\begin{equation}
\label{eq:eq7}
\begin{split}
\{\epsilon(p_x+(1-p_x)\epsilon)|11\rangle-((1-p_x)-\epsilon p_x)|00\rangle\}|\psi\rangle\\
+\sqrt{p_x(1-p_x)} \{\epsilon(1-\epsilon)|11\rangle+(1+\epsilon)|00\rangle\}|\bot\rangle.
\end{split}
\end{equation}
Measure the first 2 qubit by projection to $\{|00\rangle+|11\rangle\}$ and the complement, and accept if $|00\rangle+|11\rangle$ is measured.

Now we analyze the acceptance probability. If the instance is yes, then $p_x=1$ and hence $|00\rangle+|11\rangle$ is measured with probability 1. Assume the instance is no. Hereinafter $|\mu\rangle$, $|\mu'\rangle$ mean vectors with $O(\epsilon)$ norms. (7) can be denoted as follows.
\begin{equation}
\begin{split}
&\{-(1-p_x)|00\rangle+|\mu\rangle\}|\psi\rangle\\
&\ +\sqrt{p_x(1-p_x)} \{|00\rangle+|\mu'\rangle\}|\bot\rangle.
\end{split}
\end{equation}

Since $1-p_x\ge2^{-r}$ and $\epsilon=2^{-10r}$, the probability of projection to $\{|00\rangle+|11\rangle\}$ of this vector is $\frac{1}{\sqrt2}+O(\epsilon/(1-p_x))$.
\end{proof}
Theorem 1 is proved from Lemmas 1 and 2.

\section{Applications to other complexity classes and to the case that completeness is strictly smaller than 1}
\subsection{postQIP}
In this subsection, to show an example of application of our techniques to other classes, we define $\mathsf{postQIP}$ and prove $\mathsf{postQIP}$ equals to $\mathsf{QIP}$ with exponentially small gap. It can be proved that most classes with completeness one and with postselection are equal to themselves with exponentially small gap by our technique. We omit details since the technique is very similar to the main theorem. We remark that this relation is not obvious by techniques in previous research \cite{Aa,MN}.

\begin{defi}
$(\mathsf{postQIP})$\\
A language $L$ is in $\mathsf{postQIP}$ if
there exist a polynomial $r(n)$ and a verifier $\{V_x\}$ who can do quantum polynomial time computation and outputs 2 bit $o$ and $p$ such that;
\begin{item}
\item For any prover $P$, 
\begin{equation*}{\rm Pr}[\langle V, P\rangle(p=1)]\ge1/2^{r(|x|)},\end{equation*}
\item if $x\in L$, then there exists a prover $P$ such that \begin{equation*}{\rm Pr}[\langle V,P\rangle (o=1|p=1)] \ge 2/3,\end{equation*}
\item if $x\notin L$, then  for any prover $P$, \begin{equation*}{\rm Pr}[\langle V,P\rangle (o=1|p=1)]\le 1/3.\end{equation*}
\end{item}
Here, ${\rm Pr}[\langle V, P\rangle(p=1)]$ means the probability that $V$ outputs $p=1$ at the end of the interactive protocol between $V$ and $P$,  and ${\rm Pr}[\langle V, P\rangle(o=1|p=1)]$ means the conditional probability that $V$ outputs $o=1$ at the end of the interactive protocol between $V$ and $P$ with $p=1$.
\end{defi}
\begin{defi}
$(\mathsf{QIP_{exp}})$\\
A language $L$ is in $\mathsf{QIP_{exp}}$ if
there exist a verifier $V$ and functions of $|x|$, $c(|x|),s(|x|)$ such that $c(|x|)-s(|x|)>\frac{1}{exp},\ 0\le c(|x|),s(|x|)\le1$ satisfying followings:
\begin{item}
\item if $x\in L$, then there exists a prover $P$ such that \begin{equation*}{\rm Pr}[\langle V,P\rangle(o=1)] \ge c(|x|),\end{equation*}
\item if $x\notin L$, then for any prover $P$, \begin{equation*}{\rm Pr}[\langle V,P\rangle (o=1|p=1)]\le s(|x|).\end{equation*}
\end{item}
\end{defi}
\begin{prop}
$\mathsf{postQIP=QIP_{exp}}$
\end{prop}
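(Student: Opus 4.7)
To prove $\mathsf{postQIP}=\mathsf{QIP_{exp}}$, I would establish both inclusions, mirroring the structure of Theorem 1. For the easy direction $\mathsf{postQIP}\subseteq \mathsf{QIP_{exp}}$, I would have the $\mathsf{QIP_{exp}}$ verifier run the $\mathsf{postQIP}$ protocol unchanged and, at the end, output the answer bit $o$ if the flag $p=1$ and a uniformly random bit otherwise. Writing $a=\Pr[p=1]$ and $b=\Pr[o=1\mid p=1]$, the acceptance probability becomes $\tfrac{1}{2}+a(b-\tfrac{1}{2})$. Since the $\mathsf{postQIP}$ definition forces $a\ge 2^{-r}$ against every prover, this yields completeness at least $\tfrac{1}{2}+2^{-r}/6$ and soundness at most $\tfrac{1}{2}-2^{-r}/6$, the required inverse-exponential gap.

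For the nontrivial direction $\mathsf{QIP_{exp}}\subseteq \mathsf{postQIP}$, I would adapt Protocol 1 of Lemma 2 to the interactive setting. First I would reduce to a $\mathsf{QIP_{exp}}$ protocol with perfect completeness and soundness $1-2^{-r}$, invoking the standard perfect-completeness transformation for $\mathsf{QIP}$, which carries over to the exponentially-small-gap regime. Viewing the whole interaction as a single unitary $Q_x$ on the joint verifier--prover--ancilla register, the new $\mathsf{postQIP}$ verifier performs the direct analogue of steps 1--7 of Figure 1: apply $Q_x$ by executing the protocol with the prover, copy the output qubit to a fresh ancilla, apply $Q_x^{-1}$ by asking the prover to apply the adjoints of its operations in reverse order, and then run the distillation gadget with $\epsilon=2^{-10r}$. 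In the yes case the honest prover cooperates in the uncomputation, so the Lemma 2 analysis applies verbatim and the postselected branch accepts with probability $1$.

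The main obstacle will be the soundness analysis when a dishonest prover deviates during the uncomputation phase. Unlike in $\mathsf{QMA}(2)$, where $Q_x$ is purely a verifier operation, in $\mathsf{QIP}$ the prover may substitute arbitrary unitaries for the honest $U_P^{(i)\,\dagger}$, breaking the identity $Q_x^{-1}Q_x=I$. I would resolve this by absorbing the deviation: the concatenation of the prover's operations across step 1 and step 3 is itself \emph{some} strategy for the original $\mathsf{QIP}$ protocol, whose acceptance probability is therefore bounded by $1-2^{-r}$. Hence the state entering the distillation gadget decomposes into a component along the initial state $|\psi\rangle$ of squared amplitude at most $1-2^{-r}$ and an orthogonal component, so the rest of the Lemma 2 calculation bounds the postselected acceptance probability by a constant strictly below $1$. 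A final amplification step (parallel repetition followed by the standard completeness-reduction trick from the discussion after Theorem 1) then boosts this $(1,s)$-type protocol to the $(2/3,1/3)$ thresholds required by Definition 3.
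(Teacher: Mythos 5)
Your overall architecture matches the paper's sketch: first convert the $\mathsf{QIP_{exp}}$ protocol to one with perfect completeness via the Kitaev--Watrous transformation, then run the distillation gadget of Protocol 1, and your easy direction $\mathsf{postQIP}\subseteq\mathsf{QIP_{exp}}$ (output a random bit when $p=0$, using $\Pr[p=1]\ge 2^{-r}$ against every prover) is fine. The divergence is in how the gadget is grafted onto the interactive setting, and it is where your argument breaks. The paper's Protocol 2 does \emph{not} take $Q_x$ to be the whole interaction: it takes $Q_x$ to be only the verifier's final circuit and $|\psi\rangle$ to be the joint state just before that circuit is applied. All of steps 1--7 are then operations the verifier performs locally after the last message has arrived, so no prover cooperation is ever requested, and the Lemma~2 analysis applies verbatim because for a no instance every prover strategy yields $\|\Pi_1 Q_x|\psi\rangle\|^2\le 1-2^{-r}$ and the decomposition $|f_1\rangle=\sqrt{p_x}|\psi\rangle+\sqrt{1-p_x}|\bot\rangle$ holds for whatever state $|\psi\rangle$ the interaction produced.

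Your version, in which the prover is asked to apply $U_P^{(i)\dagger}$ in reverse during step 3, creates exactly the problem you name, and your proposed repair does not close it. The concatenation of the prover's moves in step 1 and step 3 is not ``some strategy for the original protocol'' --- it uses twice as many rounds --- and even granting a bound of $1-2^{-r}$ on something, bounding the squared overlap of the post-step-3 state with $|\psi\rangle$ is not what the Lemma~2 computation consumes. That computation rests on the exact identities $\sqrt{p_x}|f_1\rangle+\sqrt{1-p_x}|f_0\rangle=|\psi\rangle$ and $|f_0\rangle=\sqrt{1-p_x}|\psi\rangle-\sqrt{p_x}|\bot\rangle$, which come from $Q_x^{-1}Q_x=I$ and fail the moment the prover substitutes a different unitary; the carefully tuned cancellations in equation (7) then evaporate. (Your construction could in fact be rescued, but by a different observation: since the copied output qubit is already in the verifier's hands after step 2 and the remainder of the gadget touches only the first two qubits, nothing done in step 3 --- honest or not --- changes the acceptance statistics, which depend only on $p_x$ and $\langle\phi_{x,0}|\phi_{x,1}\rangle$; the paper itself remarks that step 3 is present only for the analysis. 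But that is not the argument you gave, and the cleaner fix is simply to do what the paper does and keep the entire gadget on the verifier's side.) Finally, note that the gadget yields soundness only around a constant near $1/2$, so the amplification step you mention at the end is genuinely needed to reach the $(2/3,1/3)$ thresholds of Definition 3; the paper leaves this implicit.
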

$(Sketch)$. First we show $\mathsf{{QIP}_{exp}}$ can be computed with completeness 1 by similar techniques to prove usual $\mathsf{QIP}$ can be computed with completenes 1\cite{KW}. Next we use protocol 2, which is almost the same as protocol 1, except for the first state in the protocol 1,  protocol 2 uses the final state of the interactive protocol just before measuring, instead of the witnesses $|\psi\rangle$ of $\mathsf{postQMA}(2)$
\begin{figure}
\hrulefill\\
Protocol 2\\\\
Let $Q_x$ be the last verifier's circuit for $L\in \mathsf{QIP}(1, 1-2^{-r})$. Let $|\psi\rangle$ be the state just before applying $Q_x$:\\\\ 
1. Apply $Q_x$  to $|\psi\rangle$.\\
2. Copy the output qubit to the second qubit.\\
3. Apply $Q_x^{-1}$. \\
4. Prepare $\epsilon|1\rangle-|0\rangle$ in the first qubit\\
5. Apply the unitary operator $\frac{1}{1+\epsilon^2}\left(
\begin{array}{rrr}
      1 & \epsilon \\
      -\epsilon &1\\
    \end{array}
\right)$ to the second qubit. \\
6. Measure the first and second qubits by  projection onto $\{|00\rangle,|11\rangle\}/$\\$\{|01\rangle,|10\rangle\}$ and postselect $\{|00\rangle,11\rangle\}$.\\
7. Measure the first and second qubits by projection onto $\{|00\rangle+|11\rangle\}/\{|00\rangle-|11\rangle, |01\rangle+|10\rangle, |01\rangle-|10\rangle\}$.\\
\hrulefill
\caption{Protocol of $\mathsf{postQIP}$ transfered from $\mathsf{QIP}(1, 1-2^{-r})$. Though this is almost same to $\mathsf{postQMA(2)}$, we include this to show an application of our technique to other classes explicitly.}
\label{fig.2}
\end{figure}
\subsection{The case that completeness is strictly less than 1}
In this subsection we amplify the completeness/soundness gap of a protocol of which completeness is strictly smaller than 1 by postselection. We assume that $1\gg\delta^2\gg\epsilon$. 
The statement is as follows.
\begin{prop}
For any $\mathsf{QIP_{exp}}$ protocol with completeness $1-\epsilon$ and soundness $1-\delta$ $(\epsilon\ll\sqrt\delta)$, there exists a $\mathsf{postQIP}$ protocol with completeness/soundness gap larger than the constant.
\end{prop}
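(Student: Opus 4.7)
The plan is to apply Protocol~2 directly to the given $\mathsf{QIP_{exp}}$ protocol, without first trying to boost completeness to $1$, and to choose the postselection parameter (called $\epsilon$ in step~4 of Protocol~2; I rename it $\eta$ here to avoid collision with the completeness error $\epsilon$) from an intermediate window determined by $\epsilon$ and $\delta$. Repeating the setup of Lemma~2: let $|\psi\rangle$ be the joint state just before the verifier's last circuit $Q_x$ for a given prover strategy, and let $p_x=\|\Pi_1 Q_x|\psi\rangle\|^2$. Then steps~1--3 produce $(p_x|1\rangle+(1-p_x)|0\rangle)|\psi\rangle+\sqrt{p_x(1-p_x)}(|1\rangle-|0\rangle)|\bot\rangle$ with $|\bot\rangle\perp|\psi\rangle$, and steps~4--6 followed by step~7 yield a conditional acceptance probability $[(a+b)^{2}+(c+d)^{2}]/[2(a^{2}+b^{2}+c^{2}+d^{2})]$ where $a=\eta(p_x+\eta(1-p_x))$, $b=\eta p_x-(1-p_x)$, $c=\eta(1-\eta)\sqrt{p_x(1-p_x)}$, and $d=(1+\eta)\sqrt{p_x(1-p_x)}$.

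Next I would evaluate this ratio in the two cases. In the yes case the optimal prover achieves $p_x\ge 1-\epsilon$, so to leading order the numerator and denominator are $2\eta^{2}+\epsilon/2$ and $2\eta^{2}+\epsilon$ respectively; choosing $\eta\ge c_1\sqrt{\epsilon}$ already forces the conditional acceptance above a constant strictly greater than $1/2$. In the no case every prover strategy satisfies $1-p_x\ge\delta$; writing $q=1-p_x$ and expanding, the ratio becomes $1/2+O(\eta/q)$ uniformly in $q\in[\delta,1]$, so choosing $\eta\le c_2\delta$ for a small constant $c_2$ forces the conditional acceptance to stay a constant below the yes-case bound.

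Finally, the hypothesis $\epsilon\ll\delta^{2}$ (equivalently $\sqrt{\epsilon}\ll\delta$, the inequality written at the top of the subsection) is exactly what makes the window $c_1\sqrt{\epsilon}\le\eta\le c_2\delta$ nonempty, and any polynomially computable $\eta$ in it yields a constant completeness-soundness gap. The unconditional postselection probability $a^{2}+b^{2}+c^{2}+d^{2}$ is at least $\Omega(\eta^{2})$, which is inverse exponential in $|x|$ for any such $\eta$, so the construction fits the definition of $\mathsf{postQIP}$.

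The main obstacle is the yes-case analysis. Unlike in Lemma~2 where $p_x=1$ and $|\bot\rangle$ vanishes entirely, here the orthogonal component has amplitude $\sqrt{p_x(1-p_x)}\sim\sqrt{\epsilon}$ and, after the postselection in step~6, contributes essentially equally to the accept projection $|00\rangle+|11\rangle$ and the reject projection $|00\rangle-|11\rangle$. To push the conditional acceptance above $1/2$ I must choose $\eta\gg\sqrt{\epsilon}$ so that the $|\psi\rangle$-amplitude on the accept direction dominates this leak; the tension with the no-case requirement $\eta\ll\delta$ is precisely why the hypothesis reduces to $\sqrt{\epsilon}\ll\delta$. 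A secondary verification is that the $O(\eta/q)$ bound in the no case holds uniformly for all $q\in[\delta,1]$, not just near $q=\delta$; I expect a short case split on the sign of $b=\eta p_x-(1-p_x)$, or a direct monotonicity argument in $q$, to suffice.
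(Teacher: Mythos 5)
Your proposal is correct and follows essentially the same route as the paper: the paper's Protocol 3 is exactly your construction with the rotation parameter fixed at $\eta=\delta$ (the top of your admissible window), analyzed via the same $|\psi\rangle$/$|\bot\rangle$ decomposition to get completeness $1-O(\epsilon/\delta^2)$ and soundness about $1/2$. Your extra checks --- uniformity of the no-case bound over $q\in[\delta,1]$ and the $\Omega(\eta^2)$ lower bound on the postselection probability --- are details the paper leaves implicit, and you correctly read the working hypothesis as $\epsilon\ll\delta^2$ (the subsection's standing assumption) rather than the proposition's apparent typo $\epsilon\ll\sqrt\delta$.
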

Proposition 2 is enough to prove that $\mathsf{QIP_{exp}}$ with 3 or more rounds is equal to $\mathsf{postQIP}$, since 3 rounds interactive protocols can be transformed to protocols with completeness 1 (normal QIP case is in \cite{KW}, and almost same to the exponential gap case.). On the other hand, proposition 3 is also true for 2 rounds interactive proof, but it is not clear how to make completeness near to 1 for 2 round protocols.  
\begin{figure}
\hrulefill\\
Protocol 3\\\\
Let $Q_x$ be the last verifier's circuit for $L\in \mathsf{QIP}(1-\epsilon, 1-\delta)$ and $Q_x$ accepts no instances with probability at least $1-\sqrt\delta$. Let $|\psi\rangle$ be the state just before applying $Q_x$.\\\\ 
1. Apply $Q_x$  to $|\psi\rangle$.\\
2. Copy the output qubit to the second qubit.\\
3. Apply $Q_x^{-1}$. \\
4. Prepare $\delta|1\rangle-|0\rangle$ in the first qubit.\\
5. Apply the unitary operator $\frac{1}{1+\delta^2}\left(
\begin{array}{rrr}
      1 & \delta \\
      -\delta &1\\
    \end{array}
\right)$ to the second qubit. \\
6. Measure the first and second qubits by  projection onto $\{|00\rangle,|11\rangle\}/$\\$\{|01\rangle,|10\rangle\}$ and postselect $\{|00\rangle,|11\rangle\}$.\\
7. Measure the first and second qubits by projection onto $\{|00\rangle+|11\rangle\}/\{|00\rangle-|11\rangle, |01\rangle+|10\rangle, |01\rangle-|10\rangle\}$.\\
\hrulefill
\caption{Protocol of $\mathsf{postQIP}$ transfered from $\mathsf{QIP}(1-\epsilon, 1-\delta)$. The differences between protocol 1 and protocol 2 are the rotations in step 4 and 5.}
\label{fig.3}
\end{figure}
\begin{proof}
The protocol is in Figure 3. First, we analyze the acceptance probability $p_x$ of yes instances. Let $\epsilon'=1-p_x$. After the step 4, the state is as follows.
\begin{equation}
\begin{split}
(\delta|1\rangle-|0\rangle) \{&((1-\epsilon')|1\rangle+\epsilon'|0\rangle)|\psi\rangle\\&+\sqrt{\epsilon'(1-\epsilon')} (|1\rangle-|0\rangle)|\bot\rangle\}.
\end{split}
\end{equation}
In the next equations, $|\mu\rangle$ denotes a vector with $O(\epsilon')$ norm, and $|\tau\rangle$ denotes a vector with $O(\delta)$ norm.
After the rotation in step 5 and postselection in step 6, we have the following state.
\begin{equation}
\begin{split}
&(\delta(|11\rangle+|00\rangle)+|\mu\rangle)|\psi\rangle\\&+\sqrt{\epsilon'(1-\epsilon')} (\delta|11\rangle-|00\rangle+|\tau\rangle)|\bot\rangle.
\end{split}
\end{equation}
Since $\sqrt{\epsilon'}\le\sqrt\epsilon\ll\delta$, if we measure this state in $\{|00\rangle+|11\rangle\}$ and its orthogonal vectors, we accept with probability $1-O(\epsilon/\delta^2)$

Next, we analyze the acceptance probability of no instances. Let $\delta'=1-p_x$. After the step 4, the state is as follows.
\begin{equation}
\begin{split}
(\delta|1\rangle-|0\rangle) \{&((1-\delta')|1\rangle+\delta'|0\rangle)|\psi\rangle\\&+\sqrt{\delta'(1-\delta')} (|1\rangle-|0\rangle)|\bot\rangle\}.
\end{split}
\end{equation}
In the next equations, $|\tau\rangle$ denotes a vector with $O(\delta\delta')$ norm, and $|\tau'\rangle$ denotes a vector with $O(\delta)$ norm.
After the rotation (step 5) and postselection (step 6), the state is as follows.
\begin{equation}
\begin{split}
&(\delta|11\rangle+(\delta+\delta')|00\rangle)+|\tau\rangle)|\psi\rangle\\&+\sqrt{\delta'(1-\epsilon')} (\delta|11\rangle-|00\rangle+|\tau'\rangle )|\bot\rangle.
\end{split}
\end{equation}
Since $\delta\le\delta'\ll 1$, if we measure this state in $\{|00\rangle+|11\rangle\}$ and orthogonal vectors, we accept with probability $1-\Omega(1)$.
\end{proof}

\section{Open Problems}
We conclude the paper by posing the following four questions. 

\begin{itemize}
\item The upper bounds of extremely small gap $\mathsf{QMA}$(2):

Though double-exponential gap $\mathsf{QMA}$(2) is contained in $\mathsf{NEXP}$, the upper bounds of $\mathsf{QMA}$(2) with infinitely small gap is not obvious. 
About $\mathsf{QMA}$, $\mathsf{QMA}$ with infinitely small gap is bounded by $\mathsf{ EXPSPACE}$. Moreover, if the gates are represented by algebraic numbers, and the completeness is 1, then the corresponding class is bounded by $\mathsf{PSPACE}$\cite{IKW}.
\item Decide whether small gap has more power or not:

Exponentially small gap will be more powerful than polynomial gap and some classes have strong evidence. Following examples have relatively strong evidences; 
\begin{itemize}
\item $\mathsf{QMA}$: $\mathsf{QMA}$ is in $\mathsf{PP}$, but $\mathsf{QMA}$ with exponentially small gap contains $\mathsf{PSPACE}$.
\item $\mathsf{BQP}$: $\mathsf{BQP}$ will not be able to compute $\mathsf{NP}$, but  $\mathsf{BQP}$ with exponentially small gap can compute $\mathsf{PP}$.
\item  $\mathsf{QMIP^*}$: $\mathsf{QMIP^*}$ is an example that the corresponding class with exponential small gap is more powerful, unless  $\mathsf{QMIP^*}$ equals to the exponential time version, since \cite{Ji} showed that  $\mathsf{QMIP^*}$ with exponentially small gap coincides with the exponential time version of  $\mathsf{QMIP^*}$. 
\end{itemize}

But  $\mathsf{QIP}$ seems not to have such evidence. The problem that exponentially small gap has more power than  polynomial gap or not remains open.

\ Another problem is to prove that some complexity class with exponential gap is strictly powerful than polynomial gap without any assumptions. The above examples need some computational assumptions, such as $\mathsf{PP\neq PSPACE}$, $\mathsf{NP\nsubseteq BQP}$ and that $\mathsf{QMIP^*}$ is strictly less powerful than exponential time of it. These assumptions will be extremely difficult to prove. To the best of our knowledge, there are no complexity classes of which exponentially small gap version has strictly stronger than polynomial gap version that we can prove without any assumptions.
\item Direct amplification of exponentially small gap for quantum complexity classes with completeness strictly smaller than 1:

Our proof needs some assumption of completeness and soundness. If completeness is strictly smaller than 1, the state $|\bot\rangle$ remains even in computations of yes instances, and our protocol fails. The postselection technique for more general classes remains open.
\end{itemize}

\section{Acknowledgments}
We thank to Prof. Kazuhisa Makino for supervising the author Kinoshita.
We thank to Dr. Francois Le Gall, from Kyoto Univ, Dr. Harumichi Nishimura, from Nagoya.Univ and Dr. Tomoyuki Morimae from Kyoto Univ. for helpful discussions.
\section{Appendix}

\subsection*{Overview of $\mathsf{PP=postBQP}$ \cite{Aa} and $\mathsf{postQMA=PSPACE}$ \cite{MN}}
Here we briefly describe the studies of postselection of Aaronson \cite{Aa} and Morimae and Nishimura \cite{MN} and why their protocol cannot be directly applied to $\mathsf{postQMA(2)}$.

Aaronson's protocol to prove $\mathsf{PP=postBQP}$ is as follows.\\\\
1. Make $\frac{1}{\sqrt N}\sum_{r}|r\rangle$, where $N$ is a normalization factor. Here, $r$ corresponds to a random string of the original PP algorithm.\\
2. Compute $\frac{1}{\sqrt N}\sum_{r} |r\rangle|b_r\rangle|garbage_r\rangle$, where $b_r$ is the output of the original PP algorithm which uses $r$ as a random number, and $|garbage_r\rangle$ is the garbage that depends on $r$.\\
3. Erase $|garbage_r\rangle$ by classical reversible circuit computation.\\
4. Apply Hadamard gates to $|r\rangle$, and then the state becomes $\frac{1}{\sqrt N}\sum_{r,y} (-1)^{r\cdot y}|y\rangle|b_r\rangle$.\\
5. Measure the register $|y\rangle$ and postselect $|0^n\rangle$. The state becomes $ p_{accept}|0\rangle+(1-p_{accept})|1\rangle$.\\
6. Prepare a new 1 qubit $\alpha|0\rangle+\beta|1\rangle$, for some $\alpha,\beta$, apply the controlled-Hadamard gate on $(\alpha|0\rangle+\beta|1\rangle)(p_{accept}|0\rangle+(1-p_{accept})|1\rangle)$, where $\alpha|0\rangle+\beta|1\rangle$ is the control qubit. \\
7. Measure the non-control qubits in the computational basis and postselect $|1\rangle$. \\
8. Measure the control-qubit in the computatioal basis.\\

It is critical in step 3, 4, and 5. that the computation is the linear sum of classical reversible computation. Hence it is difficult to apply this protocol to $\mathsf{postQMA(2)}$\\

Morimae and Nishimura's protocol \cite{MN} to prove $\mathsf{postQMA=QMA_{exp}(=PSPACE \cite{FL})}$ is as follows.\\\\
0. Witness is the eigenvector $|\psi\rangle$ of $\Pi_{0^n}Q_x^\dagger\Pi_{acc}Q_x$.\\
1. Apply $Q_x$.\\
2. Prepare a new 1 qubit $|0\rangle$ and apply CNOT on the deciding qubit and the new qubit, where the control qubit is the deciding qubit.\\
3. Apply $Q_x^\dagger$ and measure ancillas by projection onto $\{\Pi_{0^n}, I-\Pi_{0^n}\}$, and postselect $\Pi_{0^n}$ The new qubit in step 2. becomes $ p_{accept}|0\rangle+(1-p_{accept})|1\rangle$. \\
4-6. Similar to steps 6--8 of Aaronson's protocol.\\

Preparing the eigenvector is critical. Otherwise, the 1 qubit after step 3 entangles to remaining qubits and the 1 qubit is in mixed state. Postselection on mixed states will output a useless state, like in our protocol for no instances.


\begin{thebibliography}{}
\bibitem{Aa}S. Aaronson, Quantum computing, postselection, and probabilistic polynomial-time, Proceedings of the Royal Society A. 461 (2063): 3473-3482, 2005. 
\bibitem{AA}S. Aaronson and A. Arkhipov, The computational complexity of linear optics, Theory of Computing 9, 143-252, 2013.
\bibitem{AB}S. Arora and B. Barak, Computational Complexity: A Modern Approach, Cambridge University Press, 2009.
\bibitem{ABDF}S. Aaronson, S. Beigi, A. Drucker, and B. Fefferman, The Power of Unentanglment,  Proceedings of IEEE Complexity 2008, pp.223-236. 
\bibitem{B} L. Babai. Trading group theory for randomness, STOC '85: Proceedings of the seventeenth annual ACM symposium on Theory of computing, ACM, pp.421-429.
\bibitem{BT}  H. Blier and A. Tapp, All languages in NP have very short quantum proofs, arXiv:0709.0738v1, 2007.
\bibitem{BJS}M. J. Bremner, R. Jozsa, and D. J. Shepherd, Classical simulation of commuting quantum computations implies collapse of the polynomial hierarchy, Proc. R. Soc. A 467, 459-472, 2011.
\bibitem{BMS}M. J. Bremner, A. Montanaro, and D. J. Shepherd, Average-case complexity versus approximate simulation of commuting quantum computations, Physical Review Letters, 117, 080501, 2016.
\bibitem{FL}B. Fefferman and Cedric Lin, Quantum Merlin Arthur with Exponentially Small Gap, arXiv:1601.01975v1.
\bibitem{FKM+}K. Fujii, H. Kobayashi, T. Morimae, H. Nishimura, S. Tamate, and S. Tani,  Impossibility of Classically Simulating One-Clean-Qubit Computation, Physical Review Letters, 120, 200502, 2018.
\bibitem{HM} A. W. Harrow and A. Montanaro, Testing Product States, Quantum Merlin-Arthur Games and Tensor Optimization, Journal of the ACM, Vol. 60, No. 1, Article 3, February 2013.
\bibitem{IKW} T. Ito, H. Kobayashi, and J. Watrous, Quantum interactive proofs with weak error bounds, ITCS'12, Proceedings of the 2012 ACM Conference on Innovations in Theoretical Computer Science, pp.266-275   Jan 2012. 
\bibitem{Ji} Z. Ji, Compression of Quantum Multi-Prover Interactive Proofs, STOC 2017- Proceedings of the 49th Annual ACM SIGACT Symposium on Theory of Computing, pp.289-302.  
\bibitem{K}E. Knill, Quantum randomness and nondeterminism, Technical Report
LAUR-96-2186, Los Alamos National Laboratory, 1996. 
\bibitem{KGN}H. Kobayashi, F. Le Gall, H. Nishimura, Stronger methods of making quantum interactive proofs perfectly complete, SIAM J. Comput. 44, 243-289, 2015.
\bibitem{KSV}  A. Kitaev, A. Shen, and M. Vyalyi, Classical and Quantum Computation, volume 47 of Graduate Studies in Mathematics, American Mathematical
Society, 2002.
\bibitem{KMY}H. Kobayashi, K. Matsumoto, and T. Yamakami, Quantum Merlin-Arthur proof systems: are multiple Merlins more helpful to Arthur?, In ISAAC, pages 189-198, 2003. 
\bibitem{KW} A. Kitaev and J. Watrous, Parallelization, amplification, and exponential time simulation of quantum interactive proof systems. In Proceedings of the 32nd Annual ACM Symposium on Theory of Computing, pages 608-617, 2000.
\bibitem{LCV}Y. Liu, M. Christandl, and F. Verstraete, N-representability is QMA-complete,  Physical Review Letters, 98, 110503, 2007.
\bibitem{MFF}T. Morimae, K. Fujii, and J.F. Fitzsimons, Hardness of classically simulating the one-clean-qubit model,  Physical Review Letters, 112, 130502, 2 April 2014.
\bibitem{MN}T.Morimae and H.Nishimura, Merlinization of complexity classes above $\mathsf{BQP}$, arXiv:1704.01524v1.
\bibitem{UHB} N. Usher, M. J. Hoban, and D. E. Browne, Non-unitary quantum computation in the ground space of local Hamiltonians, arXiv:1703.08118.
\bibitem{P}A. Pereszlenyi, Multi-Prover Quantum Merlin-Arthur Proof Systems with Small Gap, arXiv:1205.2761.
\bibitem{W}J.Watrous, Succinct quantum proofs for properties of finite groups, the 41st Annual Symposium on Foundations of Computer Science, pp.537, 2000.
\end{thebibliography}
\end{document}